\documentclass[12pt]{article}

\usepackage[a4paper, total={7in, 10in}]{geometry}

\RequirePackage{amsfonts, amssymb, amsthm, amsopn}
\RequirePackage{amsmath}
\RequirePackage{bm} 
\RequirePackage{pifont}
\RequirePackage{mathrsfs}
\RequirePackage[retainorgcmds]{IEEEtrantools}
\RequirePackage{fix-cm} 
\usepackage{xcolor}
\usepackage{xargs}
\usepackage[IEEEtran]{research18}
\usepackage{hyperref}
 \usepackage[caption=false,font=footnotesize]{subfig}

\linespread{1.5}

\newcommand{\defn}{\triangleq}

\let\Normal\relax
\newcommand{\Normal}{\mathcal{N}}

\newcommand{\M}{\mathcal{M}}
\newcommand{\T}{\mathcal{T}}

\newcommand{\C}{\mathcal{C}}

\newcommand{\Rh}{R_\textnormal{h}}

\newcommandx{\strong}[3][2=\epsilon,3=n]{\mathcal{A}^{* (#3)}_{#2} (#1)}
\newcommandx{\weak}[3][2=\epsilon,3=n]{\mathcal{A}^{ (#3)}_{#2} (#1)}
\newcommandx{\typen}[2][2=n]{\top ^{ (#2)}_{#1}}
\newcommandx{\contypen}[3][3=n]{\top ^{ (#3)}_{#1}(#2)}

\newcommandx{\alltypen}[2][2=n]{\set P_{#2}({#1})}
\newcommandx{\allcontypen}[2][2=n]{\set P_{#2}({#1})}
\newcommandx{\typeseqn}[2][2=n]{\top ^{ (#2)}({#1})}
\newcommandx{\allprob}[1]{\set P({#1})}

\let\set\relax
\newcommand{\set}{\mathcal}

\newcommand{\bfX}{\mathbf{X}}

\newcommand{\bfx}{\mathbf{x}}
\newcommand{\bfy}{\mathbf{y}}

\newcommand{\bfyhat}{{\mathbf{\hat  y}}}
\newcommandx{\repbfyhat}[1][1=P]{\bfyhat_{#1}}
\newcommandx{\optrepbfyhat}[1][1=P]{\bfyhat^*_{#1}}

\newcommand{\dist}{\mathsf D}

\newcommand{\cardT}{|\mathcal T|}

\newcommandx{\admchannel}[1][1=\dist]{\set W^{\leq #1} }
\newcommandx{\admchanneln}[1][1=n]{\set W^{\leq \dist}_{#1} }
\newcommandx{\discontypen}[2][2=\dist]{\set W^{\leq #2}_n ({#1})}

\newcommandx{\admchannelpermn}[1][1=n]{\bar {\set W}^{\leq \dist}_{#1} }

\newcommandx{\GFp}[1][1=p]{\Field_{#1}}
\newcommandx{\Zp}[1][1=p]{\Integers_{#1}}

\newcommand{\snr}{\const{A}}

\hyphenation{op-tical net-works semi-conduc-tor}


\begin{document}

\title{Message-Cognizant Assistance and Feedback for the Gaussian Channel}
\author{%
Amos Lapidoth, Ligong Wang, and Yiming Yan\thanks{The authors are with the Department of Information Technology and Electrical Engineering, ETH Zurich, 8092 Zurich, Switzerland (e-mail: \mbox{lapidoth@isi.ee.ethz.ch}; \mbox{ligwang@isi.ee.ethz.ch}; \mbox{yan@isi.ee.ethz.ch}). This work was supported by SNSF.}}
\date{}

\maketitle

\begin{abstract}
  A formula is derived for the capacity of the Gaussian channel with a
  benevolent message-cognizant rate-limited helper that provides a
  noncausal description of the noise to the encoder and decoder. This
  capacity is strictly larger than when the helper is message
  oblivious, with the difference being particularly pronounced at low
  signal-to-noise ratios. It is shown that in this setup, a feedback
  link from the receiver to the encoder does not increase
  capacity. However, in the presence of such a link, said capacity can
  be achieved even if the helper is oblivious to the transmitted message.
\end{abstract}

\section{Introduction}\label{sec:intro}


Complementing recent work on the capacity of the Gaussian channel with
a rate-limited message-oblivious helper \cite{bross2020decoder},
\cite{lapidoth2020encoder}, \cite{Entropy22}, \cite{merhav2021error},
we study here the message-cognizant helper. We focus on the case where
the help is to both encoder and decoder. The help is provided
noncausally to the communicating parties and comprises an $n \Rh$-long
message-dependent binary description of the noise sequence. For recent
results on DMCs with causal help see \cite{lapidoth2023wang}. Also highly
relevant to our work is \cite{khina2023modulation} which mainly focuses (but
not exclusively) on the transmission of a random parameter rather than
a message.

The channel we study is the classical discrete-time Gaussian noise
channel \cite{covertextbook} whose time-$k$ output $Y_{k}$
corresponding to the time-$k$ input $x_{k}$ is
\begin{equation}
  Y_{k} = x_{k} + Z_{k}
\end{equation}
where $\{Z_k\}$ are IID variance-$\sigma^{2}$ centered Gaussians, i.e.,
IID $\sim \Normal(0, \sigma^2)$. We assume that $\sigma > 0$, so 
noise is present.

A blocklength-$n$ rate-$R$ coding scheme with rate-$\Rh$
message-cognizant assistance comprises a message set
$\M= \{1, \dots, 2^{nR}\}$; a descriptions set
$\set{T}=\{1, \dots, 2^{n\Rh}\}$; a helper that produces the
assistance $T = h(Z^n, M)$ for some helping function
$h\colon \Reals^{n} \times \M \to \set{T}$ (where $A^{i}$ denotes
$(A_{1}, \ldots, A_{i})$ and $M$ is the transmitted message); an
encoder that produces the $n$-tuple
$X^n=\bfx(m, T) = (x_1(m,T), \dots, x_n(m,T))$ satisfying
\begin{equation}
 \E[T]{\sum_{k=1}^n x_k^2(m, T)} \leq n \PP  
\end{equation}
where $m \in \M$ is the message to be transmitted, 
$\E[T]{\cdot}$ denotes expectation with respect to $T$, and where
\begin{equation}
  \PP  > 0
\end{equation}
is some prespecified positive constant; and a decoder that produces
the message $\hat M=\psi(Y^n, T)$ for some decoding rule
$\psi\colon \Reals^{n} \times \set{T} \to \M$.

A rate $R$ is said to be achievable if there exists a sequence of
coding schemes as above (indexed by the blocklength) for which
\begin{equation}
  \lim_{n \to \infty} \Pr[M \neq \hat{M}] =  0
\end{equation}
when $M$ is drawn equiprobably from $\M$. The supremum of achievable
rates is the capacity $C$ we seek.

The feedback capacity is defined in an analogous way with the
transmitted $n$-tuple $\bfx(m,t)$ now having the form
\begin{equation*}
  \bfx(m,t,y^{n}) = \bigl( x_{1}(m,t), x_{2}(m,t,y_{1}), \ldots,
  x_{n}(m,t,y^{n-1}) \bigr).
\end{equation*}
It captures a scenario where, thanks to feedback link from the channel
output to the encoder, the time-$i$ transmitted symbol may depend not
only on the message $m$ and on the help $t$, but also on the
previously-received symbols $y^{i-1}$.

The message-oblivious helper capacity with feedback corresponds to a
scenario where there is a feedback link as above, but the helper is
message oblivious. The help now has the form $h(Z^{n})$ and the
time-$i$ channel input has the form $x_{i}(m,t,y^{i-1})$.

Our main result expresses the different capacities in
terms of $\Rh$ and the signal-to-noise ratio
\begin{equation}
  \snr \defn \frac{\PP}{\sigma^2}.
\end{equation}

\begin{theorem}[Message-Cognizant Helper]
  \label{thm:cognizant}
  On the Gaussian channel with a noncausal message-cognizant helper
  that assists both the encoder and the decoder
  \begin{align}
    C(\Rh)=\frac{1}{2}\log\left( 1+ \snr +2\sqrt{\snr (1-2^{-2\Rh})} \right) +\Rh. 
  \end{align}
  This remains the capacity also when a feedback link from
  the receiver to the encoder is added.
\end{theorem}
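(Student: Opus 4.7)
The plan is to prove both statements of the theorem together: establish achievability via a feedback-free scheme (which a fortiori works under feedback) and a converse that already accommodates feedback. For the direct part I will use a scheme reminiscent of Wyner--Ziv joint coding. Fix the Gaussian test distribution $Z = V + N$ with $V \sim \Normal(0, \sigma^{2}(1-2^{-2\Rh}))$ and $N \sim \Normal(0, \sigma^{2} 2^{-2\Rh})$ independent, so that $I(V;Z) = \Rh$. For every message $m \in \M$ and every index $k \in \{1,\dots,2^{n\Rh}\}$ I generate an i.i.d.\ $\Normal(0,\sigma^{2}(1-2^{-2\Rh}))$ codeword $V^{n}(m,k)$. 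After observing $(m, Z^{n})$ the helper searches its sub-codebook $\{V^{n}(m, k)\}_{k}$ for an index $k^{\star}$ with $(V^{n}(m, k^{\star}), Z^{n})$ jointly typical under the test distribution; the covering lemma guarantees success because the sub-codebook has $2^{n\Rh}$ entries and $\Rh \ge I(V;Z)$, and the helper transmits $T = k^{\star}$. The encoder sends $X^{n} = \sqrt{\snr/(1-2^{-2\Rh})}\cdot V^{n}(m, T)$, which sits on the power boundary, and the decoder looks for the unique $\hat m$ such that $(V^{n}(\hat m, T), Y^{n})$ is jointly typical. Because for the correct message $Y^{n} = \bigl(1 + \sqrt{\snr/(1-2^{-2\Rh})}\bigr)V^{n}(m, T) + N^{n}$, the packing lemma yields reliable decoding whenever $R < I(V;Y)$, and a short computation shows this $I(V;Y)$ equals $\tfrac{1}{2}\log\bigl(1+\snr+2\sqrt{\snr(1-2^{-2\Rh})}\bigr)+\Rh$.

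For the converse I treat the feedback setting, since the non-feedback case is strictly weaker. By Fano's inequality $nR \le I(M; Y^{n}, T) + n\epsilon_{n}$, and the identity $h(Y^{n}\mid M, T) = h(Z^{n}\mid M, T)$, valid even with feedback because $X_{k}$ is a deterministic function of $(M, T, Y^{k-1})$, combined with $I(Z^{n};T\mid M) \le H(T\mid M)$ produces
\begin{equation*}
I(M; Y^{n}, T) \;\le\; H(T) + h(Y^{n}\mid T) - h(Z^{n}) \;\le\; n\Rh + h(Y^{n}) - \tfrac{n}{2}\log(2\pi e \sigma^{2}).
\end{equation*}
Everything then reduces to an upper bound on $h(Y^{n})$ obtained from $\tfrac{1}{n}\sum_{k} E[Y_{k}^{2}]$ via Gaussian maximum entropy and Jensen's inequality, which in turn hinges on bounding the cross-correlation $\tfrac{1}{n}\sum_{k} E[X_{k}Z_{k}]$.

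For this decisive step let $\mathcal{F}_{k}$ denote the encoder's information at time $k$, namely $(M, T)$ without feedback and $(M, T, Z^{k-1})$ with feedback, and set $\mu_{k} \defn E[Z_{k}\mid \mathcal{F}_{k}]$. Since $X_{k}$ is $\mathcal{F}_{k}$-measurable, Cauchy--Schwarz gives $|E[X_{k}Z_{k}]| = |E[X_{k}\mu_{k}]| \le \sqrt{E[X_{k}^{2}]\cdot E[\mu_{k}^{2}]}$. The chain rule and Jensen yield
\begin{equation*}
\sum_{k=1}^{n} \tfrac{1}{2}\log\!\bigl(2\pi e\, E[\textnormal{Var}(Z_{k}\mid \mathcal{F}_{k})]\bigr) \;\ge\; \sum_{k=1}^{n} h(Z_{k}\mid \mathcal{F}_{k}) \;\ge\; h(Z^{n}\mid M, T) \;\ge\; \tfrac{n}{2}\log(2\pi e\sigma^{2}) - n\Rh,
\end{equation*}
so the AM--GM inequality forces $\tfrac{1}{n}\sum_{k} E[\textnormal{Var}(Z_{k}\mid \mathcal{F}_{k})] \ge \sigma^{2}\cdot 2^{-2\Rh}$, whence by the law of total variance $\tfrac{1}{n}\sum_{k} E[\mu_{k}^{2}] \le \sigma^{2}(1-2^{-2\Rh})$. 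A second Cauchy--Schwarz across $k$ together with the power constraint then delivers $\tfrac{1}{n}\sum_{k}|E[X_{k}Z_{k}]| \le \sigma\sqrt{\PP(1-2^{-2\Rh})}$, and substitution into $\tfrac{1}{n}\sum_{k} E[Y_{k}^{2}] \le \PP + \sigma^{2} + 2\sigma\sqrt{\PP(1-2^{-2\Rh})}$ produces the matching bound on $h(Y^{n})$ and hence on $R$. The hardest part I anticipate is exactly this conditional-variance chain under feedback: one has to verify that the feedback path $Z^{k-1} \to Y^{k-1} \to X_{k}$ does not let the encoder inflate the cross-correlation beyond what $n\Rh$ bits of help permit, and it is here that the i.i.d.\ Gaussian structure of $Z^{n}$ is indispensable, because only then does maximum entropy translate the rate budget $n\Rh$ into the per-letter residual-variance bound that produces the $\sqrt{1-2^{-2\Rh}}$ factor in the capacity formula.
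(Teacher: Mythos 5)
Your proposal is correct and follows the same overall architecture as the paper: a covering-based direct part (no feedback needed) and a Fano-based converse that holds with feedback, built on the identity $h(Y^n\mid M,T)=h(Z^n\mid M,T)$, which is exactly the paper's pivot. Two steps differ in execution. For achievability, the paper first sketches precisely your typicality argument (a joint Gaussian $(X,Z)$ with correlation $\rho=\sqrt{1-2^{-2\Rh}}$ chosen so that $I(X;Z)=\Rh$, a per-message sub-codebook of size $2^{n\Rh}$ selected by the helper, and decoding against $f_{XY}$), but then replaces it with a rigorous geometric construction: a deterministic $2^{n\Rh}$-point covering of the power sphere by caps of half-angle $\theta_0$ with $\sin\theta_0=2^{-(\Rh-\eps)}$, randomized over messages by independent Haar rotations, with minimum-distance decoding and an explicit cap-area union bound. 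If you keep the typicality route you should still handle the standard technicalities the paper sidesteps geometrically (the selected codeword's power, and joint typicality of the \emph{selected} codeword with $Y^n$); these are routine but not free. For the converse, where the paper extracts the cross-correlation budget by bounding $\sum_k I(X_k;Z_k)\le I(Z^n;M,T)\le n\Rh$ from above and by $-\tfrac12\log(1-\rho_k^2)$ from below (then Jensen to get $\tfrac1n\sum_k\rho_k^2\le 1-2^{-2\Rh}$), you instead run the same entropy budget through conditional variances and the law of total variance to bound $\tfrac1n\sum_k \mathrm{E}[X_kZ_k]$ directly. The two are essentially equivalent---your $\mathrm{E}[\mu_k^2]$ plays the role of $\rho_k^2$ times the noise variance---but your version is slightly more self-contained in that it avoids invoking the estimation-theoretic bound $h(Z_k\mid X_k)\le\tfrac12\log\bigl(2\pi e\sigma^2(1-\rho_k^2)\bigr)$ as an external fact, at the cost of one extra appeal to Jensen/AM--GM. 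Both then close identically via Cauchy--Schwarz, the power constraint, and Gaussian maximum entropy.
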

\begin{proof}
  The proof the direct part, which does not utilize the feedback link,
  can be found in Section~\ref{sec:achievability}. The converse, which
  is valid also in the presence of a feedback link, can be found in
  Section~\ref{sec:converse}.
  \end{proof}

  As Theorem~\ref{thm:cognizant} shows, feedback does not increase the
  capacity of the Gaussian message-cognizant helper capacity (when the
  help is provided to both encoder and decoder). It is, however, useful
  when the helper is message oblivious. In the absence of feedback,
  the message-oblivious helper capacity is \cite[Remark 5]{lapidoth2020encoder}
  \begin{equation*}
    \frac{1}{2}\log\left( 1+ \snr \right) +\Rh .
  \end{equation*}
  But, as the following theorem shows, feedback increases the capacity
  to that of message-cognizant helper:
\begin{theorem}[Message-Oblivious Helper with Feedback]
  \label{thm:oblivious}
  The capacity of the Gaussian channel with a feedback link from the channel output
  to the encoder and with a noncausal message-oblivious helper
  that assists both the encoder and the decoder is also
  \begin{equation*}
    \frac{1}{2}\log\left( 1+ \snr +2\sqrt{\snr (1-2^{-2\Rh})} \right) +\Rh
  \end{equation*}
  i.e., the same as that with a message-cognizant helper.
\end{theorem}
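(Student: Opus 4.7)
For the converse, note that a message-cognizant helper can simulate any message-oblivious one by ignoring its message input. Hence the oblivious-with-feedback capacity is bounded above by the cognizant-with-feedback capacity, which by Theorem~\ref{thm:cognizant} equals the stated formula.

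For the direct part, my plan is to emulate the message-cognizant helper using the combination of an oblivious helper and the feedback link. The achievability of Theorem~\ref{thm:cognizant} (worked out in Section~\ref{sec:achievability}) plausibly relies on the cognizant helper's ability to pick, for each $(Z^{n},m)$, a representative index $t^{*}$ from a codebook $\{\bfx(m,t)\}$ so that $\bfx(m,t^{*})$ is jointly typical with $Z^{n}$ at correlation $\rho=\sqrt{1-2^{-2\Rh}}$, and to pass this index to the decoder via the $n\Rh$-bit help. The inflated variance $\sigma^{2}(1+\snr+2\rho\sqrt{\snr})$ of the resulting $Y$ is what produces the ``$2\sqrt{\snr(1-2^{-2\Rh})}$'' term in the formula.

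I would have the encoder perform the index-selection itself: the oblivious helper sends $T=h(Z^{n})$, and the encoder---knowing $m$, $T$, and, through the feedback, the past noise samples---computes a proxy for $t^{*}$ and delivers it to the decoder by means of a block Markov structure over $B$ blocks of length $n$. Specifically, in block $b$ the encoder would pick $t^{*}_{b}$ from $(m_{b},T_{b},Y^{(b-1)})$ and transmit a codeword that jointly carries $m_{b}$ (via $\bfx(m_{b},t^{*}_{b})$) and the previous block's index $t^{*}_{b-1}$; the decoder would first extract $t^{*}_{b-1}$ from block $b$ and then use $(Y^{(b-1)},T_{b-1},t^{*}_{b-1})$ to recover $m_{b-1}$ via cognizant-rate joint typicality decoding.

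The hard part will be twofold. First, I need to verify that the encoder can reliably choose $t^{*}_{b}$ from only the causally available information so that $\bfx(m_{b},t^{*}_{b})$ has the correct joint statistics with $Z^{(b)}$ even though the encoder does not know $Z^{(b)}$ exactly at the start of the block; this will likely require structuring the description $h$ so that the encoder's knowledge of $Z^{(b)}$---refined during the block through feedback-based sequential coding---is enough for the cognizant-style selection. Second, I need to show that the rate consumed by transmitting $t^{*}_{b-1}$ inside block $b$ is amortized in the limit $B\to\infty$, so that the effective per-symbol rate approaches $\frac{1}{2}\log(1+\snr+2\sqrt{\snr(1-2^{-2\Rh})})+\Rh$. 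This second step is the piece I expect to be most delicate, since the overhead for $t^{*}_{b-1}$ is itself $\Rh$ bits per symbol, comparable to the full help budget.
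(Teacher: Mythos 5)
Your converse is exactly the paper's: a cognizant helper can ignore the message, so Theorem~\ref{thm:cognizant} gives the upper bound, and only a direct part is needed. Your direct part, however, has two gaps that I believe are fatal to the proposed architecture rather than delicate technicalities. First, the correlated-codeword selection in the cognizant scheme is inherently \emph{noncausal}: the index $t^{*}_{b}$ must be chosen so that $\bfx(m_{b},t^{*}_{b})$ makes a small angle with the \emph{entire} noise vector of block $b$, whereas feedback reveals each noise sample only after the corresponding input has already been sent. At the start of block $b$ the encoder knows nothing about the block-$b$ noise, so it cannot steer the codeword toward it; and the oblivious help $T_{b}=h(Z^{(b)})$, being message-independent, cannot point into the message-$m_{b}$ sub-codebook. ``Refining during the block'' would require an adaptive (symbol-by-symbol) transmission strategy whose analysis is entirely different from the sphere-covering argument, and you have not supplied it. Second, the forwarding overhead does not amortize: $t^{*}_{b-1}$ is an $n\Rh$-bit index, so carrying it inside block $b$ costs $\Rh$ of rate \emph{per symbol in every block}, not a per-block constant. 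This exactly cancels the $+\Rh$ term you are trying to gain, leaving at best $\frac{1}{2}\log\bigl(1+\snr+2\sqrt{\snr(1-2^{-2\Rh})}\bigr)$, short of the target.

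The paper's scheme avoids both problems by changing \emph{which} message is sent with help, rather than trying to emulate a cognizant helper for the original one. At time $0$ the encoder sends the message uncoded, $X_{0}=m\sqrt{\PP}/2^{nR}$; the decoder sees $Y_{0}=X_{0}+Z_{0}$, and one use of feedback gives the encoder $Z_{0}=Y_{0}-X_{0}$. The quantity the decoder still needs is (a quantized, mod-$2^{nR}$ reduced version of) $Z_{0}$ --- and this is a function of the noise alone, hence known noncausally to the \emph{oblivious} helper and, after time $0$, to the encoder. So the pair can run the message-cognizant scheme of Theorem~\ref{thm:cognizant} verbatim over times $1,\dots,n$ to convey this new ``message'' $M'=\lfloor Z_{0}\cdot 2^{nR}/\sqrt{\PP}\rfloor \bmod 2^{nR}$ at the full cognizant rate, with zero rate overhead (one extra channel use is asymptotically negligible), and the decoder recovers $m$ from $Y_{0}$ and $\hat{M}'$ by a modulo subtraction. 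If you want to salvage a direct proof, this reduction --- one feedback use to turn the noise itself into a helper-cognizant message --- is the missing idea.
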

\begin{proof}
  In view of Theorem~\ref{thm:cognizant}, we only need a direct
  part. This is provided in Section~\ref{sec:fb_direct}, where we
  describe a feedback coding scheme with help that does not depend on
  the message.
  \end{proof}


    \section{Achievability}\label{sec:achievability}

    \subsection{In Broad Brushstrokes}
    
    We begin with a rough description of the coding scheme that
    ignores some of the technicalities. Let $f_{XYZ}$ be the centered
    multivariate Gaussian distribution under which $(X,Z)$ are of
    covariance matrix
    \begin{align}
      \begin{pmatrix}
        \PP & \sqrt{\PP}\sigma \rho\\
        \sqrt{\PP}\sigma \rho & \sigma^2
      \end{pmatrix}
    \end{align}
    and
    \begin{equation}
      Y = X + Z
    \end{equation}
    with probability one, where
    \begin{equation}
      \label{eq:def_rho}
      \rho= \sqrt{ 1- 2^{-2\Rh}}
    \end{equation}
    so 
    \begin{align}
      I(X;Z)&=h(Z)-h(Z|X)\\
            &=\frac{1}{2}\log (2\pi e \sigma^2) - \frac{1}{2}\log (2\pi e \sigma^2(1-\rho^2))\\
            &=\Rh. \label{eq:bin}
\end{align}



Generate $2^{n(R + \Rh)}$ codewords
$\{\bfx(m,t)\}_{(m,t) \in \M \times \set{T}}$ independently, with the
$n$ components of each being drawn IID $\Normal(0,\PP)$. If the message
to be transmitted is $M=m$, and if it observes the noise sequence
$Z^{n}$, the helper searches the $2^{n\Rh}$ codewords
$\{\bfx(m,t)\}_{t \in \set{T}}$ for a codeword $\bfx(m,t^{\star})$ that
is (weakly) jointly typical with $Z^{n}$ with respect to the
$XZ$-marginal $f_{XZ}$ of the above density $f_{XYZ}$. The helper is
very likely to find such $t^{\star}$ because, by our choice of $\rho$
\eqref{eq:def_rho}, $\Rh \approx I(X;Z)$. Having found $t^{\star}$,
the helper reveals it to the encoder and the decoder, with the former
now transmitting $\bfx(m,t^{\star})$. The decoder, for its part,
searches $\{\bfx(m',t^{\star})\}_{m' \in \M}$ for a some $\hat{M}$
for which $\bfx(\hat{M},t^{\star})$ is jointly typical with the
received sequence $Y^{n}$ with respect to the $XY$-marginal $f_{XY}$
of the above $f_{XYZ}$. Since the incorrect codewords are drawn
independently of $Y^{n}$, the decoding will succeed with high
probability when $R$ is approximately $I(X;Y)$ (where the latter is
computed with respect to $f_{XY}$). This mutual information is given by
\begin{align}
I(X;Y)&=h(Y)-h(Y|X)\\
&=h(Y)-h(Z|X)\\
&=\frac{1}{2}\log \Big(2\pi e \big(\PP+\sigma^2+2 \sqrt{\PP}\sigma \rho\big)\Big) - \frac{1}{2}\log (2\pi e \sigma^2(1-\rho^2)) \\
&=\frac{1}{2}\log \Big(1+\snr+2 \sqrt{\snr(1- 2^{-2\Rh})} \Big) +\Rh.
\end{align}


\subsection{A Geometric Approach}

For a more rigorous achievability proof, we propose a geometric
approach.

Let
$\partial \set{B}\bigl( \sqrt{n\PP} \bigr) =\{ \bfx \in
\Reals^{n}\colon \|\bfx\|^{2} = n \PP\}$ denote the
radius-$\sqrt{n\PP}$ $(n-1)$-dimensional Euclidean sphere in
$\Reals^{n}$, and let $\angle(\bfx,\bfy) \in [0,\pi]$ denote the angle
between the two (nonzero) vectors $\bfx,\bfy \in \Reals^{n}$ in the
sense that 
\begin{align}
\cos \angle(\bfx,\bfy) = \frac{\inner{\bfx}{\bfy}}{\norm{\bfx}\norm{\bfy}}. 
\end{align}

Fix $0 < \eps < \Rh$ (later to tend to zero), and let
$\theta_0\in[0,\pi/2]$ be such that
\begin{align}
\sin \theta_0 = 2^{-(\Rh-\epsilon)}.
\end{align}
Let $\C\subset \partial \set{B}\bigl( \sqrt{n\PP} \bigr)$ be a codebook of
$2^{n\Rh}$ codewords, indexed by $\set{T}$, with the covering property
that the caps of half-angle $\theta_{0}$ centered around the codewords
completely cover $\partial \set{B}\bigl( \sqrt{n\PP} \bigr)$. Such a
codebook exists whenever $n$ is large enough \cite{wyner97cap}, as we
henceforth assume.



Pick $|\M|$ random orthogonal transformations (rotations)
independently, each uniform according to the Haar measure, and index them by
the messages $m \in \M$. For each $m \in \M$, generate the set
$\C(m) = \{\bfx(m,t)\}_{t \in \set{T}}$ by applying the orthogonal
transformation corresponding to $m$ to each of the codewords in
$\C$.

Note that for each $m \in \M$, the set $\C(m)$---being the result
of rotating $\C$---also satisfies the covering property. This will be
important to keep in mind when we describe the transmission scheme.

Also note that, for each fixed $t \in \set{T}$, the codewords
$\{\bfX(m,t)\}_{m\in\M}$---which are the result of applying different
random rotations to the same element of $\C$---are independent and
uniformly distributed over the sphere. This observation will be
crucial to our analysis of the probability of error.

We next describe the transmission of some $m\in \M$. Upon observing the
noise $Z^n$, the helper seeks some $T \in \T$ such that the
angle between $\bfX(m,T)$ and $Z^n$ does not exceed
$\theta_0$. Such a $T$ exists because $\C(m)$ inherits the
covering property from $\C$. This $T$ (or one of those
satisfying the requirement) is revealed to both
the encoder and the decoder, with the former now transmitting
$\bfx(m,T)$.

The decoder---based on its observation $Y^{n}$ and the help
$T$---produces
\begin{equation}
  \hat{M} = \arg \min_{m' \in \M} \|Y^{n} - \bfx(m',T)\|.
\end{equation}


We next analyze the probability of error of our scheme. Set
\begin{align}
  \theta & = \angle(X^n, Z^n) \\
  & \leq \theta_0
\end{align}
where the inequality follows from our choice of $T$. In terms of
$\theta$,
\begin{align}
 \norm{Y^n}^2 &= \norm{X^n}^2 + \norm{Z^n}^2 + 2 \norm{X^n} \norm{Z^n}\cos\theta.
\end{align}

Setting
\begin{equation}
  \alpha = \angle(X^n, Y^n)
\end{equation}
we observe that
\begin{align}
\sin \alpha&=\frac{\norm{Z^n}}{\norm{Y^n}}\sin \theta\\
&=\Bigg(\sqrt{\frac{\norm{X^n}^2}{\norm{Z^n}^2} + 1 + 2\frac{\norm{X^n}}{\norm{Z^n}}\cos\theta}\Bigg)^{-1}\sin \theta. 
\end{align}
Recalling that $\norm{X^n}=\sqrt{n\PP}$, we obtain that, whenever
$\norm{Z^n}^2\leq n(\sigma^2+\epsilon)$,
\begin{align}
\sin \alpha&\leq \Bigg(\sqrt{ \frac{\PP}{\sigma^2+\epsilon}  + 1 + 2\sqrt{\frac{\PP}{\sigma^2+\epsilon}}\cos\theta}\Bigg)^{-1}\sin \theta\\
&\leq \Bigg(\sqrt{\frac{\PP}{\sigma^2+\epsilon}  + 1 + 2\sqrt{\frac{\PP}{\sigma^2+\epsilon} }\cos\theta_0}\Bigg)^{-1}\sin \theta_0\label{eq:theta}\\
&\defn \sin\alpha_0\label{eq:sin},
\end{align}
where \eqref{eq:theta}  holds because $\theta\leq \theta_0$ and
\eqref{eq:sin} defines $\alpha_0\in[0,\pi/2]$. 

Having verified that the condition 
$ \norm{Z^n}^2\leq n(\sigma^2+\epsilon)$ implies that 
$\angle(X^n,Y^n) \leq \alpha_0$ and that the condition $m'\neq m$
implies that $\bfX(m',T)$ is independent of $Y^{n}$ and uniformly
distributed  over the
sphere, we can bound the probability of error as follows:
\begin{align}
P_e(m)&\leq\bigPrv{\norm{Z^n}^2> n(\sigma^2+\epsilon)} +
        \bigPrv{\exists m'\neq m: \angle(\bfX(m',T),  Y^n) \leq \alpha_0}\\
&\leq\Prv{\norm{Z^n}^2> n(\sigma^2+\epsilon)} + 2^{nR} \cdot\frac{C_n(\alpha_0)}{C_n(\pi)} \label{eq:cn}\\
&= \Prv{\norm{Z^n}^2> n(\sigma^2+\epsilon)} + 2^{nR} \cdot 2^{-n(\log\sin\alpha_0+o(1))},\label{eq:pe}
\end{align}
where in \eqref{eq:cn}, we use $C_n(\phi)$ to denote the surface area of a spherical cap of 
half-angle $\phi$ on a unit $n$-sphere for $\phi\in[0, \pi]$; and \eqref{eq:pe} follows from \cite{wyner97cap}. The upper bound \eqref{eq:pe} decays to zero whenever 
\begin{align}
R&<-\log\sin\alpha_0\\
&= -\log\left(\Bigg(\sqrt{\frac{\PP}{\sigma^2+\epsilon}  + 1 + 2\sqrt{\frac{\PP}{\sigma^2+\epsilon}}\cos\theta_0}\Bigg)^{-1}\sin \theta_0\right)\\
&=\frac{1}{2}\log\bigg(\frac{\PP}{\sigma^2+\epsilon}  + 1 + 2\sqrt{\frac{\PP}{\sigma^2+\epsilon}}\cos\theta_0\bigg)-\log\sin \theta_0\\
&=\frac{1}{2}\log\bigg( \frac{\PP}{\sigma^2+\epsilon} + 1 + 2\sqrt{\frac{\PP}{\sigma^2+\epsilon}}\sqrt{1-2^{-2\Rh}}\bigg)+\Rh-\epsilon.
\end{align}
The direct part is now concluded by letting $\epsilon$ tend to zero by
employing the random-coding argument that guarantees that there exist
deterministic unitary transformation resulting in arbitrarily small
probability of error.

\section{Converse}\label{sec:converse}

We now prove a converse in the presence of a feedback link from the
channel output to the encoder.  Consider a message $M$ that is drawn
equiprobably from $\M$. Fano's inequality implies that, for any sequence of rate-$R$ coding schemes 
with rate-$\Rh$ message-cognizant assistance and vanishing
probabilities of error,  there exists some sequence $\{\delta_n\}$
tending to zero such that
\begin{IEEEeqnarray}{rCl}
nR-n\delta_n &=& H(M) - H(M|Y^n, T) \label{eq:fanodelta}\\
 &=& I(M; Y^n ,T)\\ 
 &=& I(M;Y^n|T)+I(M;T)\\
 &=& h(Y^n|T)-h(Y^n|M,T)+I(M;T)\\
 &=& h(Y^n|T)-h(Z^n)+   I(Z^n; T|M)+I(M;T)\label{eq:fanoh}\\
 &=& h(Y^n|T)-h(Z^n) +I(Z^n,M;T)\\
 &\leq& h(Y^n|T)-h(Z^n)+\log\cardT\\
 &\leq& h(Y^n)-h(Z^n)+\log\cardT\\
 &\leq& \sum_{k=1}^n h(Y_k)-h(Z^n)+\log\cardT\label{eq:fanoY},
\end{IEEEeqnarray}
where \eqref{eq:fanoh} can be justified as follows:
\begin{IEEEeqnarray}{rCl}
h(Y^n|M,T)  &=& \sum_{k=1}^n h(Y_k|M,T,Y^{k-1})\\
 &=& \sum_{k=1}^n h(Y_k-X_k|M,T,Y^{k-1}) \label{eq:fbxk}\\
 &=& \sum_{k=1}^n h(Z_k|M,T,Y^{k-1})\\
 &=& \sum_{k=1}^n h(Z_k) - \sum_{k=1}^n  I(Z_k; M,T,Y^{k-1})\\
 &=& \sum_{k=1}^n h(Z_k) - \sum_{k=1}^n  I(Z_k; M,T,Z^{k-1}) \label{eq:fbbij}\\
 &=& h(Z^n)- \sum_{k=1}^n  I(Z_k; M,T,Z^{k-1}) \label{eq:summisub}\\
 &=&h(Z^n)-   I(Z^n; T|M)
 \end{IEEEeqnarray}
 where \eqref{eq:fbxk} holds because $X_k$ is a function of
 $(M,T,Y^{k-1})$; \eqref{eq:fbbij} holds because there is a bijection
 between $(M,T,Y^{k-1})$ and $(M,T,Z^{k-1})$; and \eqref{eq:summisub}
 holds because
\begin{IEEEeqnarray}{rCl}
 \sum_{k=1}^n  I(Z_k; M,T,Z^{k-1}) &=&  \sum_{k=1}^n  I(Z_k; M,T| Z^{k-1}) \\
 &=& I(Z^n; M,T) \\
 &=&  I(Z^n; T|M). \label{eq:summi}
\end{IEEEeqnarray}

Having justified \eqref{eq:fanoY}, it remains to upper-bound its
RHS. We begin by bounding $I(X_{k};Z_{k})$ in two different ways. The
first upper-bounds it:
\begin{IEEEeqnarray}{rCl}
\sum_{k=1}^n I(X_k;Z_k) &\leq& \sum_{k=1}^n I(X_k, M, T, Z^{k-1};Z_k)  \\
&=&  \sum_{k=1}^n I(M, T, Z^{k-1};Z_k) \label{eq:fbxz}\\
&=& I(Z^n; T|M) \label{eq:summi2}\\
&\leq& \log \cardT\\
&=&n\Rh,  \label{eq:rhineq1}
\end{IEEEeqnarray}
where  \eqref{eq:fbxz} holds because $X_k$ is a function of $(M, T,
Z^{k-1})$; and \eqref{eq:summi2} follows from \eqref{eq:summi}.

The second lower-bounds it:
\begin{IEEEeqnarray}{rCl}
 I(X_k;Z_k) &=&  h(Z_k) - h(Z_k|X_k)\\
&=& \frac{1}{2}\log(2\pi e\sigma^2) - h(Z_k|X_k)\\
&\geq & \frac{1}{2}\log(2\pi e\sigma^2) -  \frac{1}{2}\log(2\pi e\sigma^2 (1-\rho_k^2))\label{eq:corr} \\
&= & -  \frac{1}{2}\log(1-\rho_k^2), \label{eq:rhineq2}
\end{IEEEeqnarray}
where in \eqref{eq:corr} we define $\rho_k\in[-1,1]$ to be the
correlation coefficient between $X_k$ and $Z_k$, and the inequality
follows from \cite[Problem 2.7]{gamalkimtextbook}.

From the two bounds \eqref{eq:rhineq1} and \eqref{eq:rhineq2}, 
\begin{IEEEeqnarray}{rCl}
\Rh&\geq &\frac{1}{n} \sum_{k=1}^n -  \frac{1}{2}\log(1-\rho_k^2)\\
&\geq &-  \frac{1}{2}\log\bigg(  1- \frac{1}{n}\sum_{k=1}^n \rho_k^2\bigg)\label{eq:fanolog}
\end{IEEEeqnarray}
where \eqref{eq:fanolog} follows from Jensen's inequality and the concavity of the logarithmic function. Hence, 
\begin{IEEEeqnarray}{rCl}
 \sum_{k=1}^n \rho_k^2
&\leq & n\big( 1- 2^{-2\Rh}\big).\label{eq:fanorh}
\end{IEEEeqnarray}

We now use this inequality to upper-bound the sum on the RHS of
\eqref{eq:fanoY}. For each $k$
\begin{IEEEeqnarray}{rCl}
\Var{Y_k} &=& \Var{X_k+Z_k}\\
&=& \Var{X_k}+ \Var{Z_k} + 2\sqrt{ \Var{X_k}\Var{Z_k}} \rho_k,
\end{IEEEeqnarray}
so
\begin{IEEEeqnarray}{rCl}
\sum_{k=1}^n h(Y_k) &\leq & \sum_{k=1}^n \frac{1}{2}\log(2\pi e\Var{Y_k}) \\
&\leq & n \cdot \frac{1}{2}\log\bigg(2\pi e\cdot \frac{1}{n} \sum_{k=1}^n \Var{Y_k}\bigg) \label{eq:fanoconc}\\
&= & n \cdot \frac{1}{2}\log\left(2\pi e \cdot\frac{1}{n} \sum_{k=1}^n \Big( \Var{X_k}+ \Var{Z_k} + 2\sqrt{ \Var{X_k}\Var{Z_k}} \rho_k \Big)\right)\\
&\leq & n \cdot \frac{1}{2}\log\left(2\pi e \cdot\frac{1}{n} \sum_{k=1}^n \Big( \E{X_k^2}+ \Var{Z_k} + 2\sqrt{  \E{X_k^2}\Var{Z_k}} \abs{\rho_k} \Big)\right)\\
&\leq & n \cdot \frac{1}{2}\log\left(2\pi e \bigg(\PP+\sigma^2 +\frac{2\sigma}{n} \sum_{k=1}^n \sqrt{ \E{X_k^2}} \abs{\rho_k} \bigg)\right)\\
&\leq & n \cdot \frac{1}{2}\log\left(2\pi e \Bigg(\PP+\sigma^2 +\frac{2\sigma}{n} \sqrt{ \sum_{k=1}^n  \E{X_k^2}} \sqrt{\sum_{k=1}^n \rho_k^2} \Bigg)\right)\label{eq:fanocauchy}\\
&\leq & n \cdot \frac{1}{2}\log\left(2\pi e \Big(\PP+\sigma^2 +\frac{2\sigma}{n} \sqrt{n\PP} \sqrt{ n\big( 1- 2^{-2\Rh}\big)} \Big)\right)\label{eq:fanorhsub}\\
&= & n \cdot \frac{1}{2}\log\left(2\pi e \Big(\PP+\sigma^2 +2\sigma \sqrt{\PP  \big( 1- 2^{-2\Rh}\big)} \Big)\right)
\end{IEEEeqnarray}
where \eqref{eq:fanoconc} follows from the concavity of  the logarithmic function; \eqref{eq:fanocauchy} follows from the Cauchy--Schwarz inequality; and \eqref{eq:fanorhsub} follows from \eqref{eq:fanorh}.

Continuing from \eqref{eq:fanoY}, 
\begin{IEEEeqnarray}{rCl}
nR-n\delta_n &\leq& \sum_{k=1}^n h(Y_k) -\frac{n}{2}\log(2\pi e \sigma^2) +n\Rh\\
&\leq& n \cdot \frac{1}{2}\log\left(2\pi e \Big(\PP+\sigma^2 +2\sigma \sqrt{\PP  \big( 1- 2^{-2\Rh}\big)} \Big)\right)-\frac{n}{2}\log(2\pi e \sigma^2) +n\Rh\\
&=& n \cdot \frac{1}{2}\log \Big(1+\snr +2 \sqrt{\snr\big( 1- 2^{-2\Rh}\big)} \Big)  +n\Rh.
\end{IEEEeqnarray}
Diving both sides of the inequality by $n$ and letting $n$ tend to
infinity yields the desired inequality
\begin{IEEEeqnarray}{rCl}
R &\leq&\frac{1}{2}\log \Big(1+\snr +2 \sqrt{\snr\big( 1- 2^{-2\Rh}\big)} \Big)  +\Rh. 
\end{IEEEeqnarray}

\section{A Feedback Scheme for a Message-Oblivious Helper}
\label{sec:fb_direct}

We next prove Theorem~\ref{thm:oblivious} by proposing a feedback
scheme with a helper that is incognizant of the message. The scheme is
reminiscent of the Schalkwijk-Kailath scheme \cite{sk1966,
  gallager2010sk}, albeit with only one use of the feedback link.  For
notational reasons, we consider transmission with blocks of length
$n+1$ and denote the corresponding channel inputs
$X_{0}, X_{1}, \ldots, X_{n}$ for positive integers $n$.
%
In broad brushstrokes, the idea is the following: We map the message
$m$ to $X_{0}$ using an injective mapping, so that from $X_{0}$ one
could recover the message. The problem is, of course, that the
receiver has no access to $X_{0}$ but only to $Y_{0}$, i.e., to the
sum of $X_{0}$ and the noise sample $Z_{0}$. This noise sample,
however, is known to the encoder as of time $1$, because it is simply
the difference between $Y_{0}$ and $X_{0}$, and the former is revealed
to the encoder after time zero. Moreover, the noise sample $Z_{0}$ is
known to the helper ahead of time, and \emph{a fortiori} as of time
$1$. We can then think about $Z_{0}$ as a new message that is to be
conveyed to the receiver using $X_{1}, \ldots, X_{n}$ with this
message being known to both helper and encoder. The idea is to then
invoke Theorem~\ref{thm:cognizant} for the transmission of $Z_{0}$ in
the time slots $1$ through $n$. The noise sample $Z_{0}$ is, of
course, continuous, so there are some quantization issues to be
addressed.


To address this, we use a construction somewhat similar to the one in
\cite[Section 5.2]{Entropy22}: For notational reasons we now assume
that the message set is $\{0, \dots, 2^{nR}-1\}$ (i.e., starting at
zero) and assume that $2^{nR}$ is an integer (in order to be able to
write $2^{nR}$ instead of $\lfloor 2^{nR}\rfloor$.

To convey some message $m \in \M$, the
encoder transmits at time zero the symbol
\begin{IEEEeqnarray}{rCl}
X_0 = X_{0}(m) = m \cdot \frac{\sqrt{\PP}}{2^{nR}}.
\end{IEEEeqnarray}
The decoder observes $Y_0=X_0+Z_0$, and the encoder---thanks to the
feedback link---can calculate $Z_0$ ($=Y_{0} - X_{0}$) after
time-$0$. 
Using the symbols $X_{1}, \ldots, X_{n}$, the encoder attempts to convey
to the decoder a quantized version of $Z_{0}$ or, more precisely, the integer
\begin{IEEEeqnarray}{rCl}
M' = \bigg\lfloor Z_0 \cdot \frac{2^{nR}}{\sqrt{\PP}}\bigg\rfloor  \mod 2^{nR}, 
\end{IEEEeqnarray}
which is a function of $Z_0$ and hence is also known to the helper. It
does so while ignoring the feedback link. The feedback link is thus
used by our scheme only to convey $Y_{0}$ to the encoder before
time-$1$.

Since, as of time-$1$, the integer $M' \in \M$ is known to both
encoder and helper, we can employ a blocklength-$n$ coding scheme with
message-cognizant helper as in Theorem~\ref{thm:cognizant} to send
$M'$ using $X_{1}, \ldots, X_{n}$ with arbitrarily small probability
of error. With the receiver's guess of $M'$ (based on
$Y_{1}, \ldots, Y_{n}$) denoted $\hat{M}'$, the receiver then guesses
that the transmitted message was
\begin{IEEEeqnarray}{rCl}
\hat M =  \bigg\lfloor Y_0 \cdot \frac{2^{nR}}{\sqrt{\PP}} - \hat{M}' \bigg\rfloor \mod 2^{nR}.
\end{IEEEeqnarray}

As we next argue, if the decoder recovers $M'$ correctly, i.e., if
$\hat{M}' = M'$, then its guess $\hat{M}$ of $m$ is correct, because
in this case
\begin{IEEEeqnarray}{rCl}
\hat M 
&=&  \bigg\lfloor \big(X_0 +Z_0\big)\cdot  \frac{2^{nR}}{\sqrt{\PP}} - M' \bigg\rfloor \mod 2^{nR}\\
&=& \bigg\lfloor m  +Z_0  \cdot   \frac{2^{nR}}{\sqrt{\PP}}  -  M' \bigg\rfloor \mod 2^{nR}\\
&=&  m  +\bigg\lfloor Z_0  \cdot   \frac{2^{nR}}{\sqrt{\PP}}  \bigg\rfloor  - M'\mod 2^{nR}\label{eq:modulo}\\
&=& m , 
\end{IEEEeqnarray}
where \eqref{eq:modulo} holds because  $m$ and $M'$ are both integers.  
The probability of error in reconstructing $m$ is thus upper bounded
by the probability of error in reconstructing  $M'$, which, by Theorem~\ref{thm:cognizant}, can be made arbitrarily small whenever
\begin{IEEEeqnarray}{rCl}
R< \frac{1}{2}\log \Big(1+\snr +2 \sqrt{\snr\big( 1- 2^{-2\Rh}\big)} \Big)  +\Rh. 
\end{IEEEeqnarray}
This proves the achievability part of Theorem~\ref{thm:oblivious}.

    \bibliographystyle{hieeetr}

    \bibliography{./bibliography_gauss.bib}

\end{document}